\documentclass[12pt]{amsart}
\usepackage{amsmath}
  \usepackage{graphics} 
  \usepackage{epsfig} 
\usepackage{graphicx}  \usepackage{epstopdf}
 \usepackage[colorlinks=true]{hyperref}
\hypersetup{urlcolor=blue, citecolor=red}

  \textheight=8.2 true in
   \textwidth=5.0 true in
    \topmargin 30pt
     \setcounter{page}{1}


\newcommand{\F}{\mathbb{F}}

\newcommand{\Q}{\overline{Q}}
\newcommand{\N}{\overline{N}}
\newtheorem{theorem}{Theorem}[section]

\newtheorem{lemma}[theorem]{Lemma}
\newtheorem{proposition}[theorem]{Proposition}

\theoremstyle{definition}

\begin{document}

\title{Some ternary cubic two-weight codes }
      \author[MinJia Shi, Daitao Huang, and Patrick Sol\'e]{}
      \subjclass{Primary: 94 B25; Secondary: 05 E30.}
 \email{smjwcl.good@163.com}
 \email{dtHuang666@163.com}
 \email{sole@enst.fr}
 \thanks{The first author is supported by NNSF of China (61672036),
Technology Foundation for Selected Overseas Chinese Scholar, Ministry of Personnel of China (05015133) and the Open Research Fund of National Mobile Communications Research Laboratory, Southeast University (2015D11) and Key projects of support program for outstanding young talents in Colleges and Universities (gxyqZD2016008). }

\thanks{$^{**}$ Corresponding author}

\maketitle

\medskip \centerline{\scshape Minjia Shi$^{*}$}
\medskip
{\footnotesize
\address\center{School of Mathematical Sciences, Anhui University,
Anhui, 230601, P. R. China, National Mobile Communications Research Laboratory, Southeast University, 210096, Nanjing, P. R. China}\\
  }
\medskip \centerline{\scshape Daitao Huang}
{\footnotesize
\centerline{School of Mathematical Sciences, Anhui University,
Anhui, 230601, P. R. China}}

\medskip
\medskip \centerline{\scshape Patrick Sol\'e}
\medskip
{\footnotesize
 \centerline{CNRS/LAGA, University Paris 8, 93 526 Saint-Denis, France}

\bigskip


\begin{abstract}
We study trace codes with defining set $L,$ a subgroup of the multiplicative group of an extension of degree $m$ of the alphabet ring
$\mathbb{F}_3+u\mathbb{F}_3+u^{2}\mathbb{F}_{3},$ with $u^{3}=1.$  These codes are abelian, and their ternary images are quasi-cyclic of co-index three (a.k.a. cubic codes).
Their Lee weight distributions are computed by using Gauss sums.
 These codes have three nonzero weights when $m$ is singly-even and $|L|=\frac{3^{3m}-3^{2m}}{2}.$
 When $m$ is odd, and $|L|=\frac{3^{3m}-3^{2m}}{2}$, or $|L|={3^{3m}-3^{2m}}$ and $m$ is a positive integer, we obtain two new infinite families of two-weight codes which are optimal. Applications of the image codes to secret sharing schemes are also given.
\end{abstract}
\maketitle
{\bf Keywords:} three-weight codes, two-weight codes, Gauss sums, trace codes

\section{Introduction}
Linear codes with few weights are of special interest in secret sharing schemes, as the associated access structures can be completely determined analytically \cite{DD,YD}.
Due to their connections with strongly regular graphs, association schemes \cite{D} and difference sets \cite{CG,CW}, two-weight codes and three-weight codes
have been studied in many articles, see for instance \cite{DGZ,DKS,DLLZ,LLHT,ZD2}. However, most constructions, so far, have used cyclic codes \cite{DLLZ,ZD2}.

A recent study introduced trace codes over rings. In a series of papers \cite{SLS2,SLS1,SWLS}, the authors have
extended the notion of trace codes from fields to rings. The alphabet ring we consider here is $R=\mathbb{F}_3+u\mathbb{F}_3+u^{2}\mathbb{F}_{3}$ with $u^{3}=1.$
This ring is instrumental in constructing quasi-cyclic codes of co-index $3$ \cite{B,LS}, called cubic codes in \cite{B}. While the construction we use here is not exactly
the cubic construction of \cite{LS}, the ternary codes we construct here are still cubic. In \cite{SWS}, the authors constructed optimal ternary codes from one-Lee-weight codes and two-Lee-weight codes over the ring $R$.

In the present paper, we construct a trace code with a defining set
$$L =\{d_1, d_2, \cdots, d_n\}\subseteq \mathcal{R}^*,$$ by the formula
$C_L=\{(Tr(ax))_{x\in L}\mid a\in \mathcal{R}\}$, where $\mathcal{R}$ denotes an extension of $R$ of degree $m$, and $Tr()$ denotes a linear function from
$\mathcal{R}$ down to $R$. This code has two or three weights, depending on the choice of the parameters $L$ and $m.$
By varying $L$ and $R$, various codes can be
constructed. Compared with the linear codes in \cite{SWS}, the two-weight codes we construct here are different, and the method is also different. The localizing set of our abelian code is not a cyclic group, but it is an abelian group. It is related to quadratic
residues in an extension of degree $m$ of $\F_3$, which makes  quadratic Gauss sums appear naturally in the weight distribution analysis.
When $m$ is odd, $L=L'$, or $L=\mathcal{R}^{*}$ and $m$ is an integer, we obtain an infinite family of two-weight codes which satisfy the optimality.
We show that, both in the three-weight and two-weight cases, the ternary image has a very nice support inclusion structure
which makes it suitable for use in a Massey secret sharing scheme  \cite{DY2,YD}. Indeed, we can show that all nonzero codewords of the ternary images are minimal
for the partial order on codewords defined by inclusion of supports.

The paper is organized as follows. Section 2 collects the basic notions and notations needed. Section 3 shows that the trace codes are abelian.
Section 4 recalls and reproves some results on Gaussian periods. Section 5 computes the weight distribution of our codes,
building on the character sum evaluation of the preceding section. Section 6 discusses the optimality of the ternary linear codes.  The minimum distance of the dual codes is discussed in Section 7. Section 8 determines the support
structure of the ternary image and describes an application to secret sharing schemes. Section 9 concludes this paper.

\section{Basic notions and notation}
\subsection{Rings}
We consider the ring $R=\F_3+u\F_3+u^{2}\F_3$ with $u^{3}=1.$ Note that, by Fermat little theorem, $u^{3}-1=(u-1)^3.$ This implies that $R$ is a local ring with
the following lattice of ideals:
$$0\subseteq \langle 1+u+u^2\rangle=\{0,1+u+u^2,2+2u+2u^2\}\subseteq \langle u-1 \rangle=\{(u-1)a:a\in R\}\subseteq R.$$
Hence, $\langle u-1 \rangle$ is the unique maximum ideal of $R.$
Given a positive integer $m$, we can construct the ring extension of $R$ of degree $m$ given by ${\mathcal{R}}=\F_{3^m}+u\F_{3^m}+u^{2}\F_{3^m}.$
There is a Frobenius operator $F$ which maps $a+ub+u^{2}c$ onto $a^{3}+ub^{3}+u^{2}c^{3},$ for all $a,b,c \in \F_{3^{m}}.$ The \emph{Trace function,} denoted by $Tr$, is defined as
$$Tr=\sum_{j=0}^{m-1}F^j.$$
It is easy to check that $$Tr(a+ub+u^{2}c)=tr(a)+utr(b)+u^{2}tr(c)$$ for $a,b,c \in \F_{3^{m}}.$
Here $tr()$ denotes the standard trace of $\F_{3^m}.$

The ring ${\mathcal{R} }$ is local with  maximal ideal $M=\langle u-1 \rangle$ and ${\mathcal{R}}/M\cong\F_{3^m}$.
The group of units ${\mathcal{R}}^*=\F_{3^{m}}^{*}\times\F_{3^{m}}\times\F_{3^{m}},$  as a multiplicative group, is isomorphic to the product of a cyclic group of
order $3^m-1$ by two elementary abelian groups
of order $3^m.$
Denoting by ${\mathcal{Q}},$ and ${\mathcal{N}},$ respectively, the \emph{squares} and the \emph{nonsquares} of $\F_{3^m}.$ For simplicity, $L'={\mathcal{Q}}\times \F_{3^m}\times \F_{3^m}.$
Thus $L'$ is a subgroup of ${\mathcal{R}}^*,$ of index $2.$
\subsection{Gray map}
The \emph{Gray} map $\phi$ from $R$ to $\F_3^{3}$ is defined by
$$\phi(a'+ub'+u^{2}c')=(a',b',c'),$$
for  $a',b',c' \in \F_3.$ It is a one to one map from $R$ to $\F_3^3.$
The \emph{Lee weight} of a vector $a+ub+u^{2}c$ is defined as the Hamming weight of its Gray image. That is to say,
$$w_L(\mathbf{a}+u\mathbf{b}+u^{2}\mathbf{c})=w_H(\mathbf{a})+w_H(\mathbf{b})+w_H(\mathbf{c}),$$
for  $\mathbf{a},\mathbf{b},\mathbf{c} \in \F_3^n.$
The \emph{Lee distance} of $x,y\in R^n$ is defined as $w_L(x-y).$
 So the Gray map is, by construction, a linear isometry from $(R^n,d_L)$ to $({\F^{3n}_3},d_H)$, where $d_{L},~d_{H}$ means Lee distance and Hamming distance, respectively.
For simplicity, we let throughout $N=3n.$
For future use, we note that scalars of weight one in $R$ comprize $\alpha u^{j},\alpha \in \F_{3}^*,j=0,1,2.$
\subsection{Codes}
A {\bf linear code} $C$ over $R$ of length $n$ is an $R$-submodule of $R^n$. If $x=(x_1,x_2,\cdots,x_n)$
 and $y=(y_1,y_2,\cdots,y_n)$ are two elements of  $R^n$, their Euclidean inner product
  is defined by $\langle x,y\rangle=\sum_{i=1}^nx_iy_i$, where the operation is performed in $R$. The {\bf dual code} of $C$ is denoted as $C^\perp=\{y\in R^n|\langle x,y\rangle =0, \forall x\in C\}.$ By definition, $C^\perp$ is also a linear code over $R$.
  Given a finite abelian group $G,$ a code over $R$ is said to be {\bf abelian} if it is an ideal of the group ring $R[G].$
  Namely, the coordinates of $C$ are indexed by elements of $G$ and $G$ acts regularly on this set.
  In the special case when $G$ is cyclic, the code is a cyclic code in the usual sense \cite{MS}.
  A code of length $\ell m,$ is said to be {\bf quasi-cyclic} of index $\ell$ and co-index $m,$ if it is linear and invariant under a shift of $\ell$ places.
  In particular, if the co-index is equal to three the code is said to be {\bf cubic} \cite{B}.
\section{Symmetry}
 For $a\in \mathcal{R}$, the vector $ev(a)$ is defined by the following evaluation map
$$ev(a)=(Tr(ax))_{x\in L},$$ where $L=L'$ or $L=\mathcal{R}^{*}$.

Define the code $C(m)$ by the formula $C(m)=\{ev(a)|a\in \mathcal{R}\}$. Thus $C(m)$ is a code of length $|L|$ over $R.$ Note that $|L'|=\frac{3^{3m}-3^{2m}}{2}$ and $|\mathcal{R}^{*}|={3^{3m}-3^{2m}}$.
\begin{lemma}\label{enum} If for all $x \in L$, we have $Tr(ax)=0$, then $a=0$.
\end{lemma}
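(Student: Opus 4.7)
\medskip
\noindent\textbf{Proof proposal.} The plan is to exhibit inside $L$ enough ``simple'' evaluation points that the hypothesis $Tr(ax)=0$ reduces to a statement about the ordinary trace $tr$ on $\mathbb{F}_{3^m}$, and then to quote non-degeneracy of $tr$. Concretely, write $a=a_0+ua_1+u^2a_2$ with $a_i\in\mathbb{F}_{3^m}$. The key observation is that if we restrict to $x=x_0\in\mathbb{F}_{3^m}\subset\mathcal{R}$ (i.e.\ $x_1=x_2=0$), then $ax=a_0x_0+u\,a_1x_0+u^2a_2x_0$ and, by the formula $Tr(a+ub+u^2c)=tr(a)+u\,tr(b)+u^2\,tr(c)$ already displayed in Section~2, the identity $Tr(ax_0)=0$ in $R$ is equivalent to the three ternary scalar equations
\[
tr(a_0x_0)=tr(a_1x_0)=tr(a_2x_0)=0.
\]

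Next I would check that such ``pure'' $x_0$ actually belong to $L$. For $L=\mathcal{R}^*$ this is immediate: every nonzero $x_0\in\mathbb{F}_{3^m}^*$ is a unit of $\mathcal{R}$. For $L=L'$ one needs $x_0$ to sit in $\mathcal{Q}$, since under the isomorphism $\mathcal{R}^*\cong\mathbb{F}_{3^m}^*\times\mathbb{F}_{3^m}\times\mathbb{F}_{3^m}$ the inclusion $\mathbb{F}_{3^m}^*\hookrightarrow\mathcal{R}^*$ lands in the first factor, so $\mathbb{F}_{3^m}^*\cap L'=\mathcal{Q}$. Thus the hypothesis yields, for each $i\in\{0,1,2\}$, the condition
\[
tr(a_ix_0)=0\qquad\text{for all }x_0\in S,
\]
where $S=\mathbb{F}_{3^m}^*$ in the first case and $S=\mathcal{Q}$ in the second.

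To finish I would invoke non-degeneracy of the ordinary trace $tr:\mathbb{F}_{3^m}\to\mathbb{F}_3$. When $S=\mathbb{F}_{3^m}^*$ this is textbook and immediately gives $a_i=0$. When $S=\mathcal{Q}$ the point I must verify is that $\mathcal{Q}$ spans $\mathbb{F}_{3^m}$ as an $\mathbb{F}_3$-vector space, because then by $\mathbb{F}_3$-linearity of $tr$ the vanishing on $\mathcal{Q}$ propagates to all of $\mathbb{F}_{3^m}$ and non-degeneracy again forces $a_i=0$. The spanning is a simple counting argument: $|\mathcal{Q}|=(3^m-1)/2>3^{m-1}$ for every $m\ge 1$, so $\mathcal{Q}$ cannot lie inside any proper $\mathbb{F}_3$-hyperplane of $\mathbb{F}_{3^m}$, and therefore its $\mathbb{F}_3$-span is all of $\mathbb{F}_{3^m}$.

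The only conceptual obstacle is the $L=L'$ case, where restricting to $x\in\mathbb{F}_{3^m}^*$ only produces the proper subset $\mathcal{Q}$; the cardinality inequality above is the clean way around it. Everything else is a bookkeeping exercise using the componentwise description of $Tr$ and the standard non-degeneracy of the trace on the extension $\mathbb{F}_{3^m}/\mathbb{F}_3$.
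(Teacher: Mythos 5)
Your proposal is correct in substance and takes a slightly different, and in one respect more careful, route than the paper. The paper keeps a general $x=x_1+x_2(u-1)+x_3(u-1)^2\in L$, expands $ax$ in the $(u-1)$-basis to get three bilinear expressions $A_1,A_2,A_3$, observes that $Tr(ax)=0$ forces $tr(A_i)=0$, and then simply invokes nondegeneracy of $tr$ to conclude $a_j=0$; it never addresses the fact that the first coordinate $x_1$ ranges only over $\mathcal{Q}$ rather than all of $\F_{3^m}^*$. You instead restrict to the scalar evaluation points $x_0\in\F_{3^m}\cap L$, use the componentwise formula $Tr(a+ub+u^2c)=tr(a)+u\,tr(b)+u^2tr(c)$ to split the hypothesis into $tr(a_ix_0)=0$ for $i=0,1,2$ simultaneously, and then handle the genuine subtlety head-on: for $L=L'$ the available scalars form only $\mathcal{Q}$, so you must (and do) argue that $\mathcal{Q}$ spans $\F_{3^m}$ over $\F_3$. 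This is exactly the point the paper's proof glosses over, so your version is the more complete argument; the price is that you use only part of the hypothesis (the scalar points), which is perfectly legitimate here.

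One small arithmetic slip: the inequality $|\mathcal{Q}|=(3^m-1)/2>3^{m-1}$ is false for $m=1$ (it reads $1>1$), and $m=1$ is a case the paper actually uses. The fix is immediate: a hyperplane $H=\ker\bigl(tr(a_i\cdot)\bigr)$ contains $0$ while $\mathcal{Q}$ does not, so $\mathcal{Q}\subseteq H$ would force $(3^m-1)/2\le 3^{m-1}-1$, which fails for every $m\ge 1$; alternatively, note that for $m=1$ one has $\mathcal{Q}=\{1\}$, which visibly spans $\F_3$. With that one-line repair the proof is complete.
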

\begin{proof}
When $L=L'$,
write $x=x_{1}+x_{2}(u-1)+x_{3}(u-1)^{2}$ and $a=a_{1}+a_{2}(u-1)+a_{3}(u-1)^{2}$ with $x_{1}\in \mathcal{Q},x_{2},x_{3},a_{1},a_{2},a_{3}$ in $\F_{3^m}.$ By a simple calculation we get
\begin{eqnarray*}
  ax &=& a_{1}x_{1}-a_{1}x_{2}+a_{1}x_{3}-a_{2}x_{1}+a_{2}x_{2}+a_{3}x_{1} \\
   && +(u-1)(a_{1}x_{2}+a_{1}x_{3}+a_{2}x_{1}+a_{2}x_{2}+a_{3}x_{1}) \\
   && +(u-1)^{2}(a_{1}x_{3}+a_{2}x_{2}+a_{3}x_{1})\\
   &=:& A_{1}+(u-1)A_{2}+(u-1)^{2}A_{2}
\end{eqnarray*}
 and $Tr(ax)=0$
is equivalent to $tr(A_{i})=0$, where $i=1,2,3$. The trace function $tr()$ \cite{MS} is nondegenerate, so we have $a_{j}=0$ where $j = 1,2,3$.
Consequently, $a=0.$
The case of $L=\mathcal{R}^{*}$ is similar to that of $L=L'$. Thus the proof is proved.
\end{proof}

\begin{proposition} The subgroup $L$ of the group of units $\mathcal{R}^*$ acts regularly on the coordinates of $C(m).$ \end{proposition}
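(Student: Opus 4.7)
The plan is to identify the coordinate set of $C(m)$ with $L$ itself (this is natural, since $ev(a)=(Tr(ax))_{x\in L}$ is indexed by $L$), and to let $L$ act on this set by left multiplication, $\sigma_y\colon x\mapsto yx$ for $y\in L$. Because $L$ is a subgroup of $\mathcal{R}^{*}$, each $\sigma_y$ is a well-defined permutation of $L$ with inverse $\sigma_{y^{-1}}$, and the assignment $y\mapsto\sigma_y$ is a group homomorphism from $L$ into the symmetric group on $L$.

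First I would verify that this action is regular in the set-theoretic sense. Transitivity: given $x_1,x_2\in L$, take $y=x_2 x_1^{-1}\in L$; then $\sigma_y(x_1)=x_2$. Freeness: if $\sigma_y(x)=x$ for some $x\in L$, then $yx=x$, and since $x\in\mathcal{R}^{*}$ we may cancel to get $y=1$. Both steps are immediate from the fact that $L\subseteq\mathcal{R}^{*}$ consists of units.

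Next I would show that the permutations $\sigma_y$ are symmetries of $C(m)$, which is what makes the regular action on coordinates meaningful for the code. Given a codeword $ev(a)\in C(m)$ and $y\in L$, applying $\sigma_y$ relabels the entry at position $x$ to position $yx$, so the new entry at position $z\in L$ equals $ev(a)_{y^{-1}z}=Tr(a y^{-1}z)$. Setting $a'=a y^{-1}\in \mathcal{R}$, this equals $Tr(a'z)=ev(a')_z$, so $\sigma_y\cdot ev(a)=ev(ay^{-1})\in C(m)$. Hence each $\sigma_y$ preserves $C(m)$.

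The main obstacle is essentially bookkeeping: making sure the direction of the permutation matches the multiplication on $a$ consistently, and that the resulting element $ay^{-1}$ still lies in $\mathcal{R}$ (which is immediate from $y^{-1}\in\mathcal{R}^{*}\subseteq\mathcal{R}$). No further nontrivial calculation is needed; in particular, the non-degeneracy proved in Lemma~\ref{enum} is not required here, though it can be invoked to conclude that the homomorphism $y\mapsto\sigma_y$ acts faithfully on codewords as well as on coordinates. Combining the two verifications gives that $L$ acts regularly on the coordinates of $C(m)$ by code automorphisms, establishing the proposition (and showing that $C(m)$ is abelian with respect to the abelian group $L$).
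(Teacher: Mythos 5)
Your proof is correct and follows essentially the same route as the paper: both identify the coordinate set with $L$ and let $L$ act on itself by multiplication, with transitivity from $y=x_2x_1^{-1}$ and freeness from cancellation of units. Your additional explicit check that each $\sigma_y$ sends $ev(a)$ to $ev(ay^{-1})\in C(m)$ is a welcome detail that the paper leaves implicit in the phrase ``permutes the coordinates of $C(m)$.''
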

\begin{proof}
For any $v,v' \in L,$ the change of variables $ x\mapsto (v'/v)x$ permutes the coordinates of $C(m),$ and maps $v$ to $v'.$
This defines a transitive action of $L$ on itself.
Such a permutation is unique, given $v,v'.$ This shows that the action is regular.
\end{proof}

The code $C(m)$ is thus an {\em abelian code} with respect to the group $L.$
In other words, it is an ideal of the group ring $R[L].$ As observed in the previous section that neither $\mathcal{R}^*,$ nor $L'$ are a cyclic group, hence $C(m)$ may be not cyclic.

It is immediate to see that, by the definition of Gray map, the ternary code $\phi(C(m))$ is quasi-cyclic of co-index $3.$

\section{Character sums}
This section is similar to that in \cite{SWLS}. For completeness, we collect it here. Let $\chi$ denote an arbitrary multiplicative character of $\F_q.$ Assume $q$ is odd.
Denote by $\eta$ the quadratic multiplicative character defined by $\eta(x)=1,$ if $x$ is a square and $\eta(x)=-1,$ if not.
Let $\psi$ denote the canonical additive character of $\F_q.$
The classical {\bf Gauss sum} can be defined by $$G(\chi)=\sum_{x\in \F_q^*}\psi(x)\chi(x).$$

Now, we give the following character sums
\begin{eqnarray*}
\Q=\sum_{x\in {\mathcal{Q}}}\psi(x), \ \N=\sum_{x\in {\mathcal{N}}}\psi(x).
\end{eqnarray*}
By orthogonality of characters \cite[Lemma 9, p. 143]{MS}, it is not difficult to check that $\Q+\N=-1.$
Since the characteristic function of ${\mathcal{Q}}$ is $\frac{1+\eta}{2},$ we obtain then
\begin{eqnarray*}
\Q=\frac{G(\eta)-1}{2}, \ \N=\frac{-G(\eta)-1}{2}.
\end{eqnarray*}

It is well known \cite{DY}, that if $q=3^m,$ the quadratic Gauss sum evaluates as

$$G(\eta)=(-1)^{m-1}i^m\sqrt{q}. $$

In particular, if $m$ is singly-even, these formulas can be simplified to
$G(\eta)=\epsilon(3)\sqrt{q},$ with $\epsilon(3)=(-1)^{\frac{(3+1)}{2}}=1,$
which implies
\begin{eqnarray*}
\Q=\frac{\sqrt{q}-1}{2},~ \N=-\frac{\sqrt{q}+1}{2}.
\end{eqnarray*}
In fact $\Q$ and $\N$ are examples of \emph{Gaussian periods}, and these relations could have been deduced from \cite[Lemma 11]{DY}.
\section{Weight distributions of trace codes}
Before we calculate the weight distribution of the trace code, let us first introduce a correlation lemma. Let $\omega=\exp(\frac{2\pi i}{3}).$ If $y=(y_1,y_2,\cdots,y_N)\in \mathbb{F}_3^N,$ let $$\Theta(y)=\sum_{j=1}^N\omega^{y_j}.$$
For convenience, we let $\theta(a)=\Theta(\phi(ev(a))).$ By linearity of the Gray map, and of the evaluation map, we see that $\theta(sa)=\Theta(\phi(ev(sa)))$ for any $s\in \F_3^*.$
\begin{lemma} [\cite{SWLS}, Lemma 1]\label{3.1} For all $y=(y_1,y_2,\cdots,y_N)\in \mathbb{F}_3^N,$ we have
$$\sum_{s=1}^{2}\Theta(sy)=2N-3w_H(y).$$
\end{lemma}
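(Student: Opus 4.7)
The plan is to expand the double sum coordinate by coordinate and exploit the identity $1+\omega+\omega^2=0$. First I would swap the order of summation and write
\[
\sum_{s=1}^{2}\Theta(sy)=\sum_{s=1}^{2}\sum_{j=1}^{N}\omega^{sy_j}=\sum_{j=1}^{N}\sum_{s=1}^{2}\omega^{sy_j},
\]
so that the problem reduces to evaluating the inner two-term sum $\omega^{y_j}+\omega^{2y_j}$ for each coordinate separately.

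Next I would split the outer index set into the two cases $y_j=0$ and $y_j\neq 0$. When $y_j=0$, the inner sum is simply $1+1=2$, contributing a total of $2(N-w_H(y))$ over all such coordinates. When $y_j\in\{1,2\}$, the inner sum becomes $\omega+\omega^2$ (in one order or the other), which equals $-1$ by the cyclotomic identity $1+\omega+\omega^2=0$; this case contributes $-w_H(y)$ in total.

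Adding the two contributions gives
\[
\sum_{s=1}^{2}\Theta(sy)=2(N-w_H(y))-w_H(y)=2N-3w_H(y),
\]
which is the claimed identity. There is no real obstacle here: the argument is a one-line character sum computation, with the only substantive ingredient being the vanishing of the sum of all cube roots of unity. The step requiring the mildest care is checking that $\omega^{sy_j}$ traverses $\{\omega,\omega^2\}$ as $s$ ranges over $\{1,2\}$ whenever $y_j\neq 0$, which is immediate since multiplication by a nonzero element of $\F_3$ permutes $\{1,2\}$.
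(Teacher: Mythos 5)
Your proof is correct: the coordinatewise split on $y_j=0$ versus $y_j\neq 0$, together with $1+\omega+\omega^2=0$ and the fact that $s\mapsto sy_j$ permutes $\{1,2\}$ for $y_j\neq 0$, gives exactly $2(N-w_H(y))-w_H(y)=2N-3w_H(y)$. The paper itself gives no proof of this lemma --- it simply imports it from the cited reference --- and your argument is the standard (essentially the only) one, so there is nothing to compare beyond noting that you have supplied the verification the paper omits.
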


\subsection{The case when $L=L'$}

From Lemma \ref{3.1}, for $ev(a)\in C(m)$, by the definition of the Gray map, we have
\begin{equation}
  w_{L}(ev(a))=\frac{2N-\sum_{s=1}^{2}\Theta(\phi(ev(a)))}{3}=\frac{2N-\sum_{s=1}^{2}\theta(sa)}{3}.
\end{equation}
According to the value of $m$, we can obtain ternary codes with different weights. Now we present the weights of codewords of $C(m)$ by using Equation (1). The following theorem tells us that when $L=L'$ and $m$ is singly-even, $\phi(C(m))$ is a three-weight ternary linear code.
\subsubsection{$m$ is singly-even}
\begin{theorem}\label{enum1} Assume $m$ is singly-even. For $a\in \mathcal{R}$, the Lee weight of codewords of $C(m)$ is
given below:
\begin{enumerate}
\item[(a)] If $a=0$, then $w_L(ev(a))=0$;
\item[(b)] If $a=(u-1)^{2}a_3$, then if \\
~$a_{3} \in {\mathcal{Q}}$, then $w_L(ev(a))=3^{3m}-3^{5m/2}$,\\
~$a_{3} \in {\mathcal{N}}$, then $w_L(ev(a))=3^{3m}+3^{5m/2}$;
\item[(c)] If $a\in \mathcal{R}\backslash {\langle (u-1)^2\rangle}$, then $w_L(ev(a))=3^{3m}-3^{2m}$.
 \end{enumerate}

 \end{theorem}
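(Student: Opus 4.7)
The plan is to evaluate $\theta(a) = \Theta(\phi(ev(a)))$ explicitly for each $a \in \mathcal{R}$ by writing $a$ and $x$ in the local-uniformizer basis $\{1, u-1, (u-1)^2\}$, reading off the three Gray components of $Tr(ax)$ in the basis $\{1, u, u^2\}$, and then substituting the resulting character sums into Equation~(1).

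Writing $a = a_1 + (u-1)a_2 + (u-1)^2 a_3$ and $x = x_1 + (u-1)x_2 + (u-1)^2 x_3$ with $x_1 \in \mathcal{Q}$, I use $(u-1)^3 = u^3 - 1 = 0$ in characteristic~$3$ to expand
\[
ax = a_1 x_1 + (u-1)(a_1 x_2 + a_2 x_1) + (u-1)^2(a_1 x_3 + a_2 x_2 + a_3 x_1),
\]
then apply $Tr$ and use $(u-1)^2 = 1+u+u^2$ in $R$ to obtain $Tr(ax) = g_0 + u\, g_1 + u^2 g_2$ with each $g_i \in \mathbb{F}_3$ a value of $tr$ on some $\mathbb{F}_{3^m}$-linear combination of $x_1, x_2, x_3$. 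This expresses
\[
\theta(a) = \sum_{x \in L'} \bigl[\psi(h_0(x)) + \psi(h_1(x)) + \psi(h_2(x))\bigr]
\]
as a sum of three additive character sums, where $\psi$ is the canonical additive character of $\mathbb{F}_{3^m}$ and each $h_i$ is linear in $x_1, x_2, x_3$ with coefficients depending on $a_1, a_2, a_3$.

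The theorem then splits into three cases. Case~(a) is immediate. For case~(c), the condition $a \notin \langle (u-1)^2 \rangle$ forces either $a_1 \neq 0$, or $a_1 = 0$ and $a_2 \neq 0$; in each sub-case one of the free variables $x_2, x_3$ appears with nonzero coefficient in every $h_i$, so the inner sum over that variable annihilates all three character sums, giving $\theta(a) = 0$. The same argument gives $\theta(2a) = 0$, so Equation~(1) yields $w_L(ev(a)) = 2N/3 = 3^{3m} - 3^{2m}$. For case~(b), $a = (u-1)^2 a_3$ forces $ax = (u-1)^2 a_3 x_1 = (1+u+u^2)\, a_3 x_1$, so all three Gray components equal $tr(a_3 x_1)$ and
\[
\theta(a) = 3 \cdot 3^{2m} \sum_{x_1 \in \mathcal{Q}} \psi(a_3 x_1),
\]
which equals $3^{2m+1}\Q$ or $3^{2m+1}\N$ according as $a_3 \in \mathcal{Q}$ or $a_3 \in \mathcal{N}$. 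Since $m$ is even, $-1$ is a square in $\mathbb{F}_{3^m}$, hence $\theta(2a) = \theta(a)$ and $\sum_{s=1}^{2} \theta(sa) = 2\theta(a)$. Substituting the singly-even evaluations $\Q = (3^{m/2}-1)/2$ and $\N = -(3^{m/2}+1)/2$ from Section~4 into Equation~(1) produces the asserted weights $3^{3m} \mp 3^{5m/2}$.

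The main obstacle is the bookkeeping in the change of basis between $\{1, u-1, (u-1)^2\}$ and $\{1, u, u^2\}$, together with the sign tracking in the three Gray components; once the $h_i$ are in hand, the rest reduces to standard additive-character orthogonality plus the two Gauss-sum values recorded in Section~4. The singly-even hypothesis on $m$ is used only at the final step, to guarantee $G(\eta) = +3^{m/2}$, so that the Lee weights emerge as the stated integers $3^{3m} \pm 3^{5m/2}$ rather than involving a factor of $i$ or an unwanted sign.
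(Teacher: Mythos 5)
Your proposal is correct and follows essentially the same route as the paper: expand $ax$ in the $\{1,u-1,(u-1)^2\}$ basis, convert to $\{1,u,u^2\}$ to read off the three Gray components, evaluate $\theta(a)$ via additive-character orthogonality (case (c)) or the Gaussian periods $\Q,\N$ (case (b)), and substitute into Equation~(1) using $\theta(sa)=\theta(a)$ for $s\in\F_3^*$. In fact you supply details the paper leaves implicit, namely the orthogonality argument showing $\theta(a)=0$ when $a\notin\langle(u-1)^2\rangle$ and the role of the singly-even hypothesis in fixing the sign of $G(\eta)$.
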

\begin{proof}

Let $a=a_{1}+a_{2}(u-1)+a_{3}(u-1)^{2}$ with $a_{1},a_{2},a_{3} \in \mathbb{F}_{3^{m}}$,  $x=x_{1}+x_{2}(u-1)+x_{3}(u-1)^{2}$ with $x_{1} \in \mathcal{Q},~x_{2},x_{3} \in \mathbb{F}_{3^{m}}$, by a direct calculation we get
\begin{eqnarray*}
  ax &=& a_{1}x_{1}-a_{1}x_{2}+a_{1}x_{3}-a_{2}x_{1}+a_{2}x_{2}+a_{3}x_{1}\\
    &&+(a_{1}x_{2}+a_{1}x_{3}+a_{2}x_{1}+a_{2}x_{2}+a_{3}x_{1})u +(a_{1}x_{3}+a_{2}x_{2}+a_{3}x_{1})u^{2}\\
   &=:&b_{1}+b_{2}u+b_{3}u^{2}.
\end{eqnarray*}
Thus we have
\begin{eqnarray*}
\phi(ev(a))&=&(tr(b_{1}),tr(b_{2}),tr(b_{3}))_{x_{1},x_{2},x_{3}}, \\
\theta(a)&=&\sum_{x_{1},x_{2},x_{3} }\omega^{tr(b_{1})}+
    \sum_{x_{1},x_{2},x_{3}}\omega^{tr(b_{2})}+\sum_{x_{1},x_{2},x_{3}}\omega^{tr(b_{3})}.
\end{eqnarray*}
     Since $m$ is singly-even, $s\in \F_3^*$ is a square in $\F_{3^m},$ we have $\theta(sa)=\theta(a)$ for any $s\in \F_3^*.$
\begin{enumerate}
\item[(a)] If $a=0$, then $Tr(ax)=0$. So $w_L(ev(a))=0$.
\item[(b)] When $a=(u-1)^{2}a_{3}$ with $a_{3} \in {\mathcal{Q}},$ then $\theta(a)=3^{2m+1}\overline{Q}$. Thus $w_L(ev(a))=3^{3m}-3^{5m/2}$ by Equation $(1)$.\\
    When $a=(u-1)^{2}a_{3}$ with $a_{3} \in {\mathcal{N}}, \theta(a)=3^{2m+1}\overline{N}$.
    Then we deduce from the Equation $(1)$ that $w_L(ev(a))=3^{3m}+3^{5m/2}$.
\item[(c)] When $a\in \mathcal{R}\backslash {\langle (u-1)^2\rangle}, \theta(a)=0$. Then we have $w_L(ev(a))=3^{3m}-3^{2m}$
\end{enumerate}
\hspace*{1.3cm}by Equation $(1)$.
\end{proof}
\hspace{-0.4cm}\textbf{Remark 5.1} By the point of Lemma \ref{enum} and Theorem \ref{enum1}, a family of ternary linear code of length $N=(3^{3m+1}-3^{2m+1})/2,$ dimension $3m,$ with three nonzero weights $w_1<w_2<w_3$ of values has been constructed. In detail, we list the weight distribution of $\phi(C_{m})$ in Table I. Notice that the parameters are different from those in \cite{DGZ,DLLZ,SWLS}. Thus, the obtained code in Theorem 5.2 are new.
\\

\begin{center}$\mathrm{Table~ I. }~~~\mathrm{weight~ distribution~ of}~ \phi(C_{m}) $\\
\begin{tabular}{cccc||cc}
\hline
  Weight&&   & & Frequency  \\
  \hline

  0        & &   & & 1\\
  $\ \ w_1=3^{3m}-3^{5m/2}$        & &   &              &$f_1=\frac{3^m-1}{2}$\\
  $w_2=3^{3m}-3^{2m}$  &    & &       & $\ \ \ \ \ f_2=3^{3m}-3^{m}$ \\
  $\ \  w_3=3^{3m}+3^{5m/2}$  &    & &       &$f_3=\frac{3^m-1}{2}$ \\
  \hline
\end{tabular}
\end{center}

\hspace{-0.4cm}\text{\textbf{Example 5.1}} Let $m=2$. We obtain a ternary code of parameters $[972,6,486]$. The nonzero weights are $486,~648$ and $972$, of frequencies $4,~720$ and $4$, respectively.

\subsubsection{$m$ is odd }

Note that $G(\eta)$ is imaginary, which implies that $\Re(\Q)=\Re(\N)=-\frac{1}{2}.$ The following lemma is a special case of Lemma 2 in \cite{SWLS}.
\begin{lemma}[\cite{SWLS}, Lemma 2]
 \label{h} $\sum_{s=1}^{2}\theta(sa)=2\Re(\theta(a)).$
\end{lemma}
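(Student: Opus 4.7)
The plan is to unfold the definition of $\theta(sa)$ and exploit the fact that $\omega = e^{2\pi i/3}$ satisfies $\omega^2 = \bar\omega$, so that the sum over $s \in \{1,2\}$ collapses to $\theta(a) + \overline{\theta(a)} = 2\Re(\theta(a))$.

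More precisely, first I would set $y = \phi(ev(a)) \in \F_3^N$ and note that, by the linearity of $\phi$ and of $ev$ already used just above the lemma, $\phi(ev(sa)) = s\,y$ componentwise in $\F_3$ for every $s \in \F_3^*$. Hence
\begin{equation*}
\theta(sa) = \Theta(sy) = \sum_{j=1}^{N} \omega^{s y_j}.
\end{equation*}
Summing over $s=1,2$ gives $\sum_{s=1}^2 \theta(sa) = \sum_{j=1}^{N}\bigl(\omega^{y_j} + \omega^{2y_j}\bigr)$.

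Next I would check, coordinatewise, that for each $y_j \in \{0,1,2\}$ one has $\omega^{y_j} + \omega^{2y_j} = 2\Re(\omega^{y_j})$. The cleanest way is to observe that $\omega^2 = \overline{\omega}$, so for $y_j \in \F_3$ (interpreted as an integer in $\{0,1,2\}$), $\omega^{2y_j} = \overline{\omega^{y_j}}$, and $z + \bar z = 2\Re(z)$ for any complex $z$. If preferred, one can also verify the three cases $y_j=0,1,2$ by hand, each yielding $2, -1, -1$ respectively, matching $2\Re(\omega^{y_j})$.

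Substituting back, I would conclude
\begin{equation*}
\sum_{s=1}^{2} \theta(sa) = \sum_{j=1}^{N} 2\Re(\omega^{y_j}) = 2\Re\!\left(\sum_{j=1}^{N} \omega^{y_j}\right) = 2\Re(\Theta(y)) = 2\Re(\theta(a)),
\end{equation*}
using linearity of the real-part operator in the middle step. Since the argument is essentially the identity $\omega^2 = \bar\omega$ applied termwise, there is no real obstacle; the only thing to be careful about is that $s$ acts on $y$ through the already-noted linearity of $\phi \circ ev$, so that the reduction to an elementary cubic-root-of-unity identity is justified before the computation starts.
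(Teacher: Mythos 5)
Your proof is correct, and it follows exactly the route the paper intends: the paper states this lemma without proof (citing Lemma 2 of \cite{SWLS}), but the sentence immediately preceding it already records the key reduction $\theta(sa)=\Theta(s\,\phi(ev(a)))$ via linearity, after which your termwise identity $\omega^{2y_j}=\overline{\omega^{y_j}}$ is the standard conclusion. Nothing is missing.
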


 By a similar approach in the proof of Theorem \ref{enum1} and combining Lemma \ref{h}, it is not difficult to obtain the following theorem.

\begin{theorem}\label{enum2} Assume $m$ is odd. For $a\in \mathcal{R}$, the Lee weight of codewords of $C(m)$ is
given below:
\begin{enumerate}
\item[(a)] If $a=0$, then $w_L(ev(a))=0$;
\item[(b)] If $a=(u-1)^{2}a_3$, with $a_{3}\in \F_{3^{m}}^{*}$,then $w_L(ev(a))=3^{3m}$;
\item[(c)] If $a\in \mathcal{R}\backslash {\langle (u-1)^2\rangle}$, then $w_L(ev(a))=3^{3m}-3^{2m}$.
 \end{enumerate}
 \end{theorem}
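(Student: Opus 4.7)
My plan is to follow the proof of Theorem~\ref{enum1} nearly verbatim, replacing the observation $\theta(sa)=\theta(a)$ (valid in the singly-even case because $\F_3^{*}\subset\mathcal{Q}$) by Lemma~\ref{h}'s identity $\sum_{s=1}^{2}\theta(sa)=2\,\Re(\theta(a))$, which holds without any parity assumption on $m$. Equation~(1) then reduces each Lee weight to a single real-part evaluation of $\theta(a)$, which I will compute case by case using the character sum machinery of Section~4.

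I would begin by writing $a=a_1+a_2(u-1)+a_3(u-1)^2$ and $x=x_1+x_2(u-1)+x_3(u-1)^2$ with $x_1\in\mathcal{Q}$ and $x_2,x_3\in\F_{3^m}$, and expanding $ax=b_1+b_2u+b_3u^2$ exactly as in Theorem~\ref{enum1}. Factoring the triple sum $\theta(a)=\sum_{x}(\omega^{tr(b_1)}+\omega^{tr(b_2)}+\omega^{tr(b_3)})$ over the three independent variables and invoking additive orthogonality of $tr$ over $\F_{3^m}$, each of the three partial sums vanishes unless the coefficients of $x_2$ and $x_3$ are simultaneously zero. A direct inspection of the $b_j$ shows that this happens precisely when $a_1=a_2=0$. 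This immediately disposes of case~(c): for $a\notin\langle(u-1)^2\rangle$ one gets $\theta(sa)=0$ for every $s\in\F_3^{*}$, and Equation~(1) yields $w_L(ev(a))=2N/3=3^{3m}-3^{2m}$. Case~(a) is trivial.

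For case~(b), with $a=(u-1)^2 a_3$ and $a_3\in\F_{3^m}^{*}$, each of the three partial sums collapses to $3^{2m}\sum_{x_1\in\mathcal{Q}}\omega^{tr(a_3x_1)}$, and since $a_3\mathcal{Q}$ equals $\mathcal{Q}$ or $\mathcal{N}$ according to whether $a_3$ is a square or a nonsquare, I obtain $\theta(a)=3^{2m+1}\overline{Q}$ or $\theta(a)=3^{2m+1}\overline{N}$ accordingly. The crucial input from Section~4 is that for odd $m$ the quadratic Gauss sum $G(\eta)=i^{m}\sqrt{3^{m}}$ is purely imaginary, so $\Re(\overline{Q})=\Re(\overline{N})=-1/2$; hence $\Re(\theta(a))=-3^{2m+1}/2$ uniformly across $a_3\in\mathcal{Q}\cup\mathcal{N}$. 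Lemma~\ref{h} then gives $\sum_{s=1}^{2}\theta(sa)=-3^{2m+1}$, and Equation~(1) produces $w_L(ev(a))=(2N+3^{2m+1})/3=3^{3m}$. No single step is technically difficult; the only conceptual content is that the two distinct weights of the singly-even case merge into one here, precisely because the imaginary part of the Gauss sum is annihilated upon symmetrizing $\theta(sa)$ over $s\in\F_3^{*}$.
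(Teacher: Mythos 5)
Your proposal is correct and follows essentially the same route as the paper: the paper's proof likewise reduces cases (a) and (c) to the computation in Theorem~\ref{enum1} and handles case (b) via Lemma~\ref{h} together with $\Re(\theta(a))=-3^{2m+1}/2$, which comes from $\Re(\overline{Q})=\Re(\overline{N})=-\tfrac{1}{2}$ when $G(\eta)$ is purely imaginary. Your write-up is in fact more explicit than the paper's (which compresses most of this into ``similar to Theorem 5.2''), and all the numerical checks, e.g.\ $(2N+3^{2m+1})/3=3^{3m}$, are accurate.
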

\begin{proof}
 The proofs of the cases (a) and (c) are similar to that of Theorem \ref{enum1}. The case (b) follows from Lemma \ref{h} applied to the correlation lemma.
Thus  $\Re(\theta(a))=-3^{2m+1}/2,$ and $3w_L(ev(a))=2N-2\Re(\theta(a)),$ which implies $w_L(ev(a))=3^{3m}.$
The result follows.
\end{proof}

\hspace{-0.4cm}\text{\textbf{Example 5.2}} Let $m=1$. We obtain a ternary code of parameters $[27,3,18]$. The nonzero weights are $18$ and $27$, of frequencies $24$ and $2$, respectively.

\subsection{The case when $L=\mathcal{R}^{*}$}
In this subsection, we will construct an infinite family of two-weight codes on the condition that $L=\mathcal{R}^{*}$.
\begin{theorem}\label{enum2} For $a\in \mathcal{R}$, the Lee weight of codewords of $C(m)$ is given below:
\begin{enumerate}
\item[(a)] If $a=0$, then $w_L(ev(a))=0$;
\item[(b)] If $a=(u-1)^{2}a_3$, with $a_{3}\in \F_{3^{m}}^{*}$, then $w_L(ev(a))=2\cdot3^{3m}$;
\item[(c)] If $a\in \mathcal{R}\backslash {\langle (u-1)^2\rangle}$, then $w_L(ev(a))=2(3^{3m}-3^{2m})$.
 \end{enumerate}
 \end{theorem}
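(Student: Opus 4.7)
The plan is to parallel the proof of Theorem~\ref{enum1}, with the case $L=L'$ replaced by $L=\mathcal{R}^{*}$; the main simplification is that the first-coordinate sum now runs over $\F_{3^m}^{*}$ rather than over the squares $\mathcal{Q}$, so no quadratic Gauss sum enters and the answer becomes independent of the parity of $m$. First I would write $a=a_{1}+a_{2}(u-1)+a_{3}(u-1)^{2}$, $x=x_{1}+x_{2}(u-1)+x_{3}(u-1)^{2}$ and expand $ax=b_{1}+b_{2}u+b_{3}u^{2}$, obtaining exactly the same three $\F_{3^m}$-linear forms $b_{i}(x_{1},x_{2},x_{3})$ as in that earlier proof. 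Next I would observe that for every $s\in\F_{3}^{*}\subset\mathcal{R}^{*}$ the substitution $x\mapsto sx$ permutes $\mathcal{R}^{*}$, so $\theta(sa)=\theta(a)$ and hence $\sum_{s=1}^{2}\theta(sa)=2\theta(a)$. Combined with Lemma~\ref{3.1}, this yields the master identity $3w_{L}(ev(a))=2N-2\theta(a)$ with $N=3(3^{3m}-3^{2m})$, reducing the task to evaluating $\theta(a)$.

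The core step is to compute
\[
\theta(a)=\sum_{x_{1}\in\F_{3^m}^{*}}\sum_{x_{2},x_{3}\in\F_{3^m}}\bigl(\omega^{tr(b_{1})}+\omega^{tr(b_{2})}+\omega^{tr(b_{3})}\bigr)
\]
by factoring along the independent variables $x_{1},x_{2},x_{3}$. By nondegeneracy of $tr$, the $x_{2}$- and $x_{3}$-sums each vanish unless the matching coefficient in $b_{i}$ is zero, whereas the $x_{1}$-sum over $\F_{3^m}^{*}$ equals $-1$ or $3^{m}-1$ according as its coefficient is nonzero or zero. A direct inspection of the three forms $b_{1},b_{2},b_{3}$ shows that the simultaneous vanishing of the $x_{2}$- and $x_{3}$-coefficients in any single $b_{i}$ already forces $a_{1}=a_{2}=0$; hence $\theta(a)=0$ as soon as $a\notin\langle(u-1)^{2}\rangle$, which yields case~(c) with $w_{L}(ev(a))=2N/3=2(3^{3m}-3^{2m})$.

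For case~(b), $a_{1}=a_{2}=0$ and $a_{3}\in\F_{3^m}^{*}$ make all three $b_{i}$ collapse to $a_{3}x_{1}$, so
\[
\theta(a)=3\cdot 3^{2m}\sum_{x_{1}\in\F_{3^m}^{*}}\omega^{tr(a_{3}x_{1})}=-3^{2m+1},
\]
and the master identity gives $w_{L}(ev(a))=2\cdot 3^{3m}$; case~(a) is immediate. The only point I expect to require care is the coefficient bookkeeping in the previous paragraph, namely checking that no $b_{i}$ can have both its $x_{2}$- and $x_{3}$-coefficients vanish outside $\langle(u-1)^{2}\rangle$; once that is verified, everything reduces to elementary orthogonality sums over $\F_{3^m}^{*}$, with none of the quadratic Gauss-sum machinery of Section~4 needed.
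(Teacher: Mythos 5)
Your proof is correct and follows exactly the route the paper intends: the paper's own ``proof'' of this theorem is the one-line remark that it is similar to that of Theorem 5.2, and your argument is precisely that adaptation, with the sum over $x_1$ now running over all of $\F_{3^m}^*$ so that plain orthogonality (giving $-1$ or $3^m-1$) replaces the quadratic Gauss sum. The coefficient bookkeeping you flag does check out --- for each $b_i$ the $x_2$- and $x_3$-coefficients vanish simultaneously only when $a_1=a_2=0$ --- and the resulting values $\theta(a)=0$ and $\theta(a)=-3^{2m+1}$ yield cases (c) and (b) via the identity $3w_L(ev(a))=2N-2\theta(a)$.
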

 \begin{proof}
 By a similar approach in the proof of Theorem \ref{enum1}, the result follows, so we omit the details here.
 \end{proof}

\hspace{-0.4cm}\text{\textbf{Example 5.3}} Let $m=1$. We obtain a ternary code of parameters $[54,3,36]$. The nonzero weights are $36$ and $54$, of frequencies $24$ and $2$, respectively.\\

\hspace{-0.4cm}\text{\textbf{Example 5.4}} Let $m=2$. We obtain a ternary code of parameters $[1944,6,1296]$. The  nonzero weights are $1296$ and $1458$, of frequencies $720$ and $8$, respectively.\\

\hspace{-0.4cm}\textbf{Remark 5.2} Comparing with \cite{CK,SWS,SWLS}, we obtain two new families of ternary two-weight codes based on the obtained trace codes over $R$ in Theorem 5.4 and Theorem 5.5. The weight distribution is listed in Table II.\\

 \begin{center}$\mathrm{Table \  II. }~~~\mathrm{weight~ distribution~ of}~ \phi(C_{m})$ {\rm from Theorems 5.4 and 5.5}\\
\begin{tabular}{ccc||cc}
\hline
  Weight& &&& Frequency  \\
  \hline
  0        & &&& 1\\
  $\ \ \ \ \ \ \ \ \ w_1'=3^{3m}-3^{2m}$        &  &&&              $\ \ \  f_1'=3^{3m}-3^{m}$\\
  $w_2'=3^{3m}$            &  &&&              $f_2'=3^m-1$\\
  \hline
   \hline
  0        & &&& 1\\
  $\ \ \ \ \ \ \ \ \ w_1''=2(3^{3m}-3^{2m})$        &  &&&              $\ \ \  f_1''=3^{3m}-3^{m}$\\
  $w_2''=2\cdot3^{3m}$            &  &&&              $f_2''=3^m-1$\\
  \hline
\end{tabular}
\end{center}

\section{Optimality of the image codes}
In the previous section, we have constructed two new infinite families of ternary two-weight codes and a new family of three-weight ternary codes. Now we study their optimality.
\begin{theorem}\label{t}
The image codes $\phi(C(m))$ of length $3|L|$ are optimal based on the following cases
\begin{enumerate}
\item[(i)] $m$ is odd and $L=L'$ in Theorem $5.4$;
\item[(ii)] $m$ is a positive integer and $L=\mathcal{R}^{*}$ in Theorem $5.5$.
\end{enumerate}
\end{theorem}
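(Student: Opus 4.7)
The plan is to establish distance optimality by the Griesmer bound: I will show that no ternary linear $[n,3m,d+1]$ code exists, which together with the parameters of $\phi(C(m))$ extracted from the weight distributions above forces $d$ to be the largest possible minimum distance at length $n = 3|L|$ and dimension $k = 3m$. The value $k = 3m$ follows from Lemma \ref{enum}, since injectivity of the evaluation map forces $|\phi(C(m))| = |\mathcal{R}| = 3^{3m}$. Applying $n \ge \sum_{i=0}^{k-1}\lceil d/3^i \rceil$ with $d$ replaced by $d+1$, I aim to produce a strict contradiction with the known length $n$.

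For case (i), $n = (3^{3m+1} - 3^{2m+1})/2$ and $d+1 = 3^{3m} - 3^{2m} + 1$. I split the Griesmer sum at $i = 2m$, since $3^{2m}$ divides $d = 3^{2m}(3^m - 1)$. The $i = 0$ term contributes $d + 1$. For $1 \le i \le 2m$ one has $\lceil (d+1)/3^i \rceil = 3^{3m-i} - 3^{2m-i} + 1$, because the fractional part is $3^{-i} \in (0,1)$. For $i = 2m+j$ with $1 \le j \le m-1$, rewrite $(d+1)/3^{2m+j} = 3^{m-j} + (1 - 3^{2m})/3^{2m+j}$; since the correction lies in $(-1, 0)$, the ceiling equals $3^{m-j}$. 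Summing the two resulting geometric series together with the $i = 0$ term, the Griesmer bound evaluates to $n + 2m$, strictly larger than $n$ for $m \ge 1$.

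Case (ii) is entirely parallel: with $n = 3^{3m+1} - 3^{2m+1}$ and $d + 1 = 2(3^{3m} - 3^{2m}) + 1$, each ceiling for $1 \le i \le 2m$ becomes $2 \cdot 3^{3m-i} - 2 \cdot 3^{2m-i} + 1$, and for $i = 2m+j$ the ceiling is $2 \cdot 3^{m-j}$, since the perturbation $-2/3^j + 1/3^{2m+j}$ lies in $(-1, 0)$. Collecting terms, the Griesmer bound is $n + 2m - 1$, again strictly greater than $n$ for all $m \ge 1$, so no $[n,3m,d+1]_3$ code exists.

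The main obstacle is the ceiling accounting for the range $i > 2m$: one must verify that the small corrections $\pm 1/3^j$, $\pm 2/3^j$, together with $1/3^{2m+j}$, combine to a value strictly inside $(-1, 0)$ or $(0, 1)$, so that the ceiling is determined unambiguously by the leading integer part. Once that is confirmed, the remainder of the argument is routine summation of two geometric progressions and a final comparison with $n$.
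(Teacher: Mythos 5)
Your proposal is correct and follows essentially the same route as the paper: apply the Griesmer bound to a hypothetical $[n,3m,d+1]_3$ code, split the ceiling sum at $j=2m$, evaluate the two geometric progressions, and derive the contradiction $\sum_j \lceil (d+1)/3^j\rceil > n$. Your ceiling bookkeeping is in fact slightly more careful than the paper's in case (i), where the sum evaluates to $n+2m$ rather than the paper's stated $n+2m-1$, but both exceed $n$ for all $m\ge 1$, so the conclusion is unaffected.
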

\begin{proof}
Recall the $3$-ary version of the Griesmer bound. If $[N,K,d]$ are the parameters of a linear ternary code. Then
$$\sum_{j=0}^{K-1}\Big{\lceil} \frac{d}{3^j} \Big{\rceil} \le N.$$
In the case of (i), $N=(3^{3m+1}-3^{2m+1})/2,\, K=3m,\,d=3^{3m}-3^{2m}.$ The ceiling function takes two values depending on the position of $j$.
\begin{itemize}
 \item $0\leq j\le 2m \Rightarrow \lceil \frac{d+1}{3^j} \rceil =3^{3m-j}-3^{2m-j}+1$.
 \item $2m<j\leq3m-1 \Rightarrow \lceil \frac{d+1}{3^j} \rceil =3^{3m-j}$.
\end{itemize}
\begin{eqnarray*}
\sum_{j=0}^{K-1}\Big{\lceil} \frac{d+1}{3^j} \Big{\rceil} &=& \sum_{j=0}^{2m}(3^{3m-j}-3^{2m-j}+1) + \sum_{j=2m+1}^{3m-1}3^{3m-j} \\
&=&(3^{3m+1}-3^{2m+1})/2+2m-1 > N,
\end{eqnarray*}
 which collapses to $m \geq 1$. Thus, this completed the proof of the case (i).

For case (ii), we have  $N=(3^{3m+1}-3^{2m+1}),\, K=3m,\,d=2(3^{3m}-3^{2m}).$  By a simple calculation, we can easily obtain that
\begin{eqnarray*}
\sum_{j=0}^{K-1}\Big{\lceil} \frac{d+1}{3^j} \Big{\rceil} &=& \sum_{j=0}^{2m}(2(3^{3m-j}-3^{2m-j})+1) + \sum_{j=2m+1}^{3m-1}2\cdot3^{3m-j} \\ &=&3^{3m+1}-3^{2m+1}+2m-1 > N,
\end{eqnarray*}
 which collapses to $m \geq 1$.
\end{proof}
\section{ The dual Lee distance of the trace code}
Similar to Lemma 3 and Theorem 4 in \cite{SWLS}, we can calculate the dual distance of the obtained trace code. We still prove them for completeness.
\begin{lemma}\label{nonde}
 If for all $a \in \mathcal{R},$ we have that $Tr(ax)=0,$ then $x=0.$
\end{lemma}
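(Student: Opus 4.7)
The plan is to mirror the proof of Lemma \ref{enum}, exploiting the non-degeneracy of the field trace $tr:\F_{3^m}\to\F_3$ on each of the three coordinates with respect to the $\{1,u,u^2\}$ basis of $\mathcal{R}$ over $\F_{3^m}$.

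First I would write $x=x_1+x_2u+x_3u^2$ with $x_1,x_2,x_3\in\F_{3^m}$, and observe that it suffices to test the hypothesis on the sub-family $a\in\F_{3^m}\subset\mathcal{R}$ (i.e.\ $a=a_1\in\F_{3^m}$). For such $a$, the product is $ax=a_1x_1+a_1x_2u+a_1x_3u^2$, so by the formula $Tr(y_1+uy_2+u^2y_3)=tr(y_1)+u\,tr(y_2)+u^2\,tr(y_3)$ we get
\begin{equation*}
Tr(a_1 x)=tr(a_1x_1)+u\,tr(a_1x_2)+u^2\,tr(a_1x_3).
\end{equation*}
Since $\{1,u,u^2\}$ is a free basis of $\mathcal{R}$ as an $\F_{3^m}$-module, the vanishing of $Tr(a_1x)$ forces $tr(a_1x_i)=0$ for $i=1,2,3$.

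Next I would let $a_1$ vary over all of $\F_{3^m}$. Non-degeneracy of $tr$ on $\F_{3^m}$ (cf.\ \cite{MS}) then gives $x_i=0$ for $i=1,2,3$, whence $x=0$. Thus the implication holds even when $a$ is restricted to the subring $\F_{3^m}$, which is a stronger statement than what is claimed; the full quantifier over $a\in\mathcal{R}$ is only easier.

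There is no real obstacle here: the argument is the same bilinearity-plus-nondegeneracy that drives Lemma \ref{enum}, with the roles of $a$ and $x$ swapped. The only mild subtlety is noticing that one does not need to use a larger test set than $\F_{3^m}$, and that the three coordinates of $Tr(ax)$ in the $\{1,u,u^2\}$-basis are genuinely independent $\F_3$-linear functionals of $x$ once $a$ runs through $\F_{3^m}$.
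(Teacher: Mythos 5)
Your proof is correct. The paper itself omits the argument for Lemma~\ref{nonde}, saying only that it is similar to that of Lemma~\ref{enum}, whose proof expands both $a$ and $x$ in the basis $\{1,\,u-1,\,(u-1)^2\}$ and works through the resulting cross terms before invoking non-degeneracy of $tr$. You rely on the same underlying mechanism---non-degeneracy of the field trace applied to each $\F_{3^m}$-coordinate---but with a genuine streamlining: since the hypothesis quantifies over \emph{all} $a\in\mathcal{R}$, you are free to restrict the test set to the subring $\F_{3^m}$, for which $ax=a_1x_1+a_1x_2u+a_1x_3u^2$ has no cross terms, and the conclusion $tr(a_1x_i)=0$ for all $a_1\in\F_{3^m}$ falls out immediately from the freeness of $\{1,u,u^2\}$ and the formula $Tr(y_1+uy_2+u^2y_3)=tr(y_1)+u\,tr(y_2)+u^2\,tr(y_3)$. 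This shortcut is cleaner than a mechanical transcription of Lemma~\ref{enum}'s computation, and it is specific to this direction of the duality: in Lemma~\ref{enum} the variable $x$ is confined to $L$ (e.g.\ $x_1\in\mathcal{Q}$), so the analogous restriction of the test set is not available in the same painless form. Your closing remark that the implication already holds with $a$ restricted to $\F_{3^m}$ is accurate and is a mild strengthening of what the lemma asserts.
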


\begin{proof}
The proof is similar to that of Lemma \ref{enum}, we omit it here.
\end{proof}
\begin{theorem} \label{d'}
 For all $m\ge 1,$ the dual Lee distance $d'$ of $C(m)$ is $2.$
\end{theorem}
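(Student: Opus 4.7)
The plan is to prove the matching bounds $d' \geq 2$ and $d' \leq 2$ separately, with the lower bound coming from Lemma \ref{nonde} and the upper bound from the $R$-linearity of $Tr$.

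For $d' \geq 2$, suppose toward contradiction that $c \in C(m)^\perp$ has Lee weight one. Then $c = \beta \mathbf{e}_i$ for some scalar $\beta \in R$ of Lee weight one and some coordinate index $i$. As recorded in Section 2, the Lee weight one scalars of $R$ are the six units $\pm u^j$ with $j \in \{0,1,2\}$. Orthogonality with the codeword $ev(a)$ yields $\beta \cdot Tr(a x_i) = 0$ for every $a \in \mathcal{R}$. Multiplying by $\beta^{-1}$ gives $Tr(a x_i) = 0$ for all $a \in \mathcal{R}$, whereupon Lemma \ref{nonde} forces $x_i = 0$, contradicting $x_i \in L \subseteq \mathcal{R}^*$.

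For $d' \leq 2$, I would exhibit an explicit weight-two dual codeword using the identity
$$Tr(a \cdot u x) = u \cdot Tr(a x) \qquad \text{for all } a, x \in \mathcal{R},$$
which follows from the $R$-linearity of the trace (since $u \in R$). Fix any $x_0 \in L$; one checks that $u x_0 \in L$ in both cases, because $u$ is a unit in $\mathcal{R}$ and, moreover, $u = 1 \cdot (1 + (u-1))$ places $u$ in $\mathcal{Q} \cdot (1 + M) = L'$ under the decomposition $\mathcal{R}^* \cong \F_{3^m}^* \times (1 + M)$. Since $u \neq 1$, the positions of $x_0$ and $u x_0$ in $L$ are distinct. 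Define $\gamma \in R^n$ with the value $-u$ at the position of $x_0$, the value $1$ at the position of $u x_0$, and $0$ elsewhere. The identity then gives
$$\langle ev(a), \gamma \rangle = -u \cdot Tr(a x_0) + Tr(a u x_0) = 0$$
for every $a \in \mathcal{R}$, so $\gamma \in C(m)^\perp$, and its Lee weight is $w_L(-u) + w_L(1) = 2$.

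The one step requiring care is verifying $u x_0 \in L$ when $L = L'$ and $m$ is odd: in that regime $-1 \notin \mathcal{Q}$, so the naive choice $x' = -x_0$ would fall outside $L'$, but $u \in L'$ works uniformly in $m$ and in the choice of $L$ thanks to the multiplicative structure of $\mathcal{R}^*$. Everything else is a mechanical application of the trace identity and the definition of Lee weight.
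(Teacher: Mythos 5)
Your proof is correct, and while the lower bound $d'\ge 2$ is argued exactly as in the paper (a weight-one dual word $\beta\mathbf{e}_i$ with $\beta$ a unit forces $Tr(ax_i)=0$ for all $a$, contradicting Lemma \ref{nonde}), your upper bound takes a genuinely different route. The paper shows $d'<3$ nonconstructively, by applying the sphere-packing bound to $\phi(C(m)^\perp)$ and deriving a numerical contradiction such as $(3-2\cdot 3^m)3^{2m}\ge 1$; this depends on $|L|$ being large relative to $3^{3m}$. You instead exhibit an explicit weight-two dual codeword supported on the pair of coordinates $x_0$ and $ux_0$, using the $R$-linearity $Tr(aux_0)=u\,Tr(ax_0)$, together with the correct observation that $u=1+(u-1)$ reduces to $1\in\mathcal{Q}$ modulo the maximal ideal and hence lies in $L'$ (so that $ux_0\in L$ in both cases, and $ux_0\neq x_0$ since $x_0$ is a unit). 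Your side remark that the naive pair $(x_0,-x_0)$ fails for $L=L'$ with $m$ odd, because $-1\notin\mathcal{Q}$ there, is accurate and shows you checked the one place the construction could break. The constructive argument buys something the counting argument does not: it is independent of the cardinality of $L$ (it only needs $uL\subseteq L$), and it pinpoints concretely the pair of ``equivalent'' coordinates responsible for $d'=2$, which is exactly the source of the dictatorial users in the secret sharing application of Section 8. Both proofs are valid; yours is the more informative of the two.
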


\begin{proof}
 First, we check that $d'\ge 2.$
 The approach is the same as Theorem 7.2 in \cite{SWLS}. Now we will prove it by showing that $C(m)^\bot$ does not contain a codeword of Lee weight one.
 If it does, let us assume first that it has value $\alpha u^{j}\neq 0$ at some $x \in L$, where $j={0,1,2}$. This implies that $\forall a \in \mathcal{R},\alpha u^{j}
 Tr(ax)=0,$ or, $Tr(a \alpha u^{j}x)=0,$ and by Lemma \ref{nonde}, we have $ x=0$, which contradicts the assumption. So $d'\ge 2$.\\
\hspace*{0.5cm}~Next, we show that $d'<3.$ If not, we can apply the sphere-packing bound to $\phi(C(m)^\bot).$ Naturally, we obtain
 $$3^{3m}\ge 1+|L|(3-1),$$  where $|L|=|L'|$ or $|L|=|\mathcal{R}^{*}|$. That implies $(3-2\cdot 3^{m})3^{2m} \geq 1$ or $3^{2m+1}(2-5\cdot3^{m-1})\geq 1$, respectively. It is a contradiction when $m\geq 1$. In summary, $d=2$.
 \end{proof}
\section{Application to secret sharing schemes}
\subsection{Determining minimal vectors}
It is interesting to determine minimal vectors of a given $p$-ary linear code. Minimal vectors in linear codes arise in numerous applications, particularly, in studying linear secret sharing schemes (SSS). We say that a minimal vector of a linear code $C$ is a nonzero codeword that does not cover any other nonzero codeword. The basic property minimal vector of a given $p$-ary linear code was described by the following lemma \cite{AB}.

\begin{lemma}\label{d}(Ashikhmin-Barg) Denote by $w_0$ and $w_{\infty}$ the minimum and maximum nonzero weights of a $p$-ary linear code $C$, respectively. If
$$\frac{w_0}{w_{\infty}}>\frac{p-1}{p},$$ then every nonzero codeword of $C$ is minimal.
\end{lemma}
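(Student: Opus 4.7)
The plan is to argue by contradiction: assume some nonzero $c_1 \in C$ covers a nonzero $c_2 \in C$ (meaning $\mathrm{supp}(c_2) \subseteq \mathrm{supp}(c_1)$) with $c_2$ not a scalar multiple of $c_1$, and derive that $w_0/w_\infty \le (p-1)/p$, contradicting the hypothesis.

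First I would set up the key counting step. For each $\alpha \in \F_p^*$, let $S_\alpha = \{i \in \mathrm{supp}(c_2) : c_{1,i} = \alpha\, c_{2,i}\}$. At every coordinate $i$ in $\mathrm{supp}(c_2)$ both entries are nonzero, so there is exactly one $\alpha$ with $i \in S_\alpha$, namely $\alpha = c_{1,i}/c_{2,i}$. Hence the sets $S_\alpha$ partition $\mathrm{supp}(c_2)$, giving $\sum_{\alpha \in \F_p^*}|S_\alpha| = w_H(c_2)$. By pigeonhole there exists $\alpha^* \in \F_p^*$ with $|S_{\alpha^*}| \ge w_H(c_2)/(p-1)$.

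Next I would estimate the weight of the codeword $c_1 - \alpha^* c_2 \in C$. Outside $\mathrm{supp}(c_1)$ both $c_1$ and (by the covering hypothesis) $c_2$ vanish, so $c_1 - \alpha^* c_2$ is zero there. Inside $\mathrm{supp}(c_1)$, the combination is forced to vanish on all of $S_{\alpha^*}$ by construction. Therefore
\[
w_H(c_1 - \alpha^* c_2) \le w_H(c_1) - |S_{\alpha^*}| \le w_\infty - \frac{w_0}{p-1},
\]
using $w_H(c_1) \le w_\infty$ and $w_H(c_2) \ge w_0$.

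Finally, since $c_2$ is assumed not proportional to $c_1$, the codeword $c_1 - \alpha^* c_2$ is nonzero, so its Lee weight is at least $w_0$. Combining with the previous display yields $w_0 \le w_\infty - w_0/(p-1)$, which rearranges to $w_0/w_\infty \le (p-1)/p$, contradicting the hypothesis. Hence every nonzero codeword $c_2$ covered by some $c_1$ must be a scalar multiple of $c_1$, which is exactly the statement that every nonzero codeword of $C$ is minimal. The only subtle point, and the one I would double-check carefully, is the partition step showing that the $S_\alpha$ are disjoint and cover $\mathrm{supp}(c_2)$; everything else is a clean averaging-plus-triangle-inequality argument.
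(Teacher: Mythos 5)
Your argument is correct, and it is the standard proof of the Ashikhmin--Barg bound: partition $\mathrm{supp}(c_2)$ according to the ratio $c_{1,i}/c_{2,i}$, pick the most popular ratio $\alpha^*$ by pigeonhole, and observe that the nonzero codeword $c_1-\alpha^* c_2$ has weight at most $w_\infty - w_0/(p-1)$ yet at least $w_0$, forcing $w_0/w_\infty\le (p-1)/p$. The partition step you flag as the subtle point is indeed sound: since $\mathrm{supp}(c_2)\subseteq\mathrm{supp}(c_1)$, every $i\in\mathrm{supp}(c_2)$ has both entries nonzero, so it lies in exactly one $S_\alpha$. Note that the paper itself offers no proof of this lemma --- it is quoted from the Ashikhmin--Barg reference --- so there is nothing to compare against beyond the literature; your write-up supplies the missing argument faithfully. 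One cosmetic slip: in the final paragraph you say the ``Lee weight'' of $c_1-\alpha^*c_2$ is at least $w_0$, but the lemma (and your whole argument) is about Hamming weight of a $p$-ary code; the Lee metric belongs to the ring-valued codes elsewhere in the paper, not here.
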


Next, under Lemma \ref{d}, we investigate the minimal vectors of the three-weight and two-weight codes which constructed in Section $5$.
\begin{proposition}\label{support1}
 All the nonzero codewords of the image codes $\phi(C(m))$ are minimal based on the following three cases
\begin{enumerate}
\item[(i)] $m(\geq 6)$ is singly-even and $L=L'$ in Theorem $5.2$.\\
\item[(ii)] $m(\geq 1)$ is odd and $L=L'$ in Theorem $5.4$.\\
\item[(iii)] $m$ is a positive integer and $L=\mathcal{R}^{*}$ in Theorem $5.5$.
\end{enumerate}
\end{proposition}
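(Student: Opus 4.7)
The plan is to apply the Ashikhmin--Barg criterion (Lemma \ref{d}) in each of the three cases. For $p=3$ this reduces the minimality of all nonzero codewords to the single inequality $w_0/w_\infty>2/3$, where $w_0$ and $w_\infty$ are the minimum and maximum nonzero Lee weights of $\phi(C(m))$, as recorded in Theorems 5.2, 5.4 and 5.5. So, once those weights are read off, the proof reduces in each case to checking one elementary inequality.

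In case (i), Theorem 5.2 gives $w_0=3^{3m}-3^{5m/2}$ and $w_\infty=3^{3m}+3^{5m/2}$. Dividing numerator and denominator by $3^{5m/2}$ collapses the ratio to $(3^{m/2}-1)/(3^{m/2}+1)$, and the condition $(3^{m/2}-1)/(3^{m/2}+1)>2/3$ simplifies after clearing denominators to $3^{m/2}>5$. Since $m$ is singly-even, the smallest admissible value is $m/2=3$, i.e.\ $m=6$, which matches the stated hypothesis.

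In cases (ii) and (iii), the weight distributions in Theorems 5.4 and 5.5 differ only by an overall factor of $2$, so the ratio $w_0/w_\infty$ is the same in both and equals $1-3^{-m}$. The requirement $1-3^{-m}>2/3$ rearranges to $3^m>3$, i.e.\ $m\ge 2$, and the inequality is then immediate for the corresponding range of $m$.

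The main obstacle I foresee is the borderline value $m=1$ in (ii) and (iii), where the ratio is \emph{exactly} $2/3$ and the Ashikhmin--Barg criterion does not apply in its strict form. In fact, a short direct calculation shows that when $m=1$ one has $w_\infty=N$, so a maximum-weight codeword has full support and therefore cannot be minimal. This suggests tightening the hypothesis to $m\ge 2$ in cases (ii) and (iii); otherwise a substitute argument specifically tailored to $m=1$ would be needed.
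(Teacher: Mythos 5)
Your proof follows exactly the paper's route: apply the Ashikhmin--Barg criterion with the extremal weights read off from Theorems 5.2, 5.4 and 5.5, and in case (i) the computation reducing to $3^{m/2}>5$ (hence $m\ge 6$) is the same as the one the paper writes out. Your observation about $m=1$ is a genuine catch that the paper misses: its proof only works case (i) in detail and dismisses (ii) and (iii) as ``similar,'' but for $m=1$ the ratio $w_0/w_\infty$ equals exactly $\frac{2}{3}$, the strict inequality of Lemma \ref{d} fails, and indeed the weight-$27$ (resp.\ weight-$54$) codewords of Examples 5.2 and 5.3 have weight equal to the length $N$, hence full support, so they cover every other nonzero codeword and are not minimal. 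The correct hypothesis in cases (ii) and (iii) is $m\ge 2$, and your proposal is the more careful argument.
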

\begin{proof}
 In the case of (1), by the preceding lemma with $w_0=w_1,$ and $w_{\infty}=w_3.$ Rewriting the inequality of the lemma as $3w_1>(3-1)w_3,$ we end up with the condition
 $$ 3\cdot3^{3m}-3\cdot3^{5m/2}>2\cdot3^{3m}+2\cdot3^{5m/2}.$$ Since $m$ is singly-even. The condition follows from the fact that $3^{m/2}\geq5$.

The proof of other cases are similar to those of the case (1), we omit them here.
\end{proof}
\subsection{Secret sharing schemes}
The purpose of determining minimal vectors is to determine the sets of all minimal access sets of a secret sharing scheme (SSS). An SSS were first introduced by Blakly and Shamir at the end of $70s$ twentieth antury. Massey's scheme is one of the famous SSS. Massey's scheme is a construction of such a scheme where a code $C$ of length $N$ over $\mathbb{F}_{p}$. On the other hand, it is worth mentioning when all nonzero codewords are minimal, it was shown in \cite{DY2} that there is the following alternative, depending on $d'$:
\begin{itemize}
 \item If $d'\ge 3,$ then the SSS is \emph{``democratic''}: every user belongs to the same number of coalitions.
 \item If $d'=2,$  then there are users  who belong to every coalition: the \emph{``dictators''.}
\end{itemize}
Depending on the application, one or the other situation might be more suitable.
By Proposition \ref{support1} and Theorem \ref{d'}, we see that three Secret Sharing Schemes built on the image codes $\phi(C(m))$ in Theorems 5.2, 5.4 and 5.5 are dictatorial.
\section{Conclusion}
This paper is devoted to the study of trace codes with defining set $L$ included in an extension of degree $m$ of the alphabet, the local ring $\mathbb{F}_3+u\mathbb{F}_3+u^{2}\mathbb{F}_{3}$ with $u^{3}=1$.  These codes are abelian, and their ternary images are quasi-cyclic of co-index three (a.k.a. cubic codes).
Their Lee weight distributions are computed by using Gauss sums (see Table I and Table II).
 These codes have three nonzero weights when $m$ is singly-even and $|L|=\frac{3^{3m}-3^{2m}}{2}.$
 When $m$ is odd, and $|L|=\frac{3^{3m}-3^{2m}}{2}$, or $|L|={3^{3m}-3^{2m}}$ and $m$ is a positive integer,
 we obtain two new infinite family of two-weight codes. Both are shown to be optimal, by application of the Griesmer bound.
 Applications of the image codes to secret sharing schemes are also given.

It would be interesting to replace our Gaussian periods $\Q,\N$  by other character sums that are amenable to exact evaluation, in the vein of the sums which appear in the study of irreducible
cyclic codes \cite{DY,MR}. This would lead to other enumerative results of codes with few weights. The parameters of the two-weight codes we constructed are different from
those in the classic paper \cite{CK}, and also from more recent constructions like \cite{DD,SLS2, SLS1}. Writing a new survey on two-weight codes seems like a challenging
but very useful project.
\section{Acknowledgement}
 This research is
supported by National Natural Science Foundation of China (61672036), the Open Research Fund of National Mobile Communications Research Laboratory, Southeast University (2015D11),
Technology Foundation for Selected Overseas Chinese Scholar, Ministry of Personnel of China (05015133) and
Key Projects of Support Program for outstanding young talents in Colleges and Universities (gxyqZD2016008).

\end{document}